\documentclass[preprint,12pt]{elsarticle}

\usepackage{amssymb}
\usepackage{amsmath}
\usepackage{amsthm}

\newcommand{\seqnum}[1]{\href{https://oeis.org/#1}{\rm \underline{#1}}}
\usepackage{hyperref}
\newtheorem{remark}{Remark}
\newtheorem{proposition}{Proposition}

\usepackage{color}

\journal{arXiv}
\begin{document}
\begin{frontmatter}

%% Title, authors and addresses

%% use the tnoteref command within \title for footnotes;
%% use the tnotetext command for theassociated footnote;
%% use the fnref command within \author or \affiliation for footnotes;
%% use the fntext command for theassociated footnote;
%% use the corref command within \author for corresponding author footnotes;
%% use the cortext command for theassociated footnote;
%% use the ead command for the email address,
%% and the form \ead[url] for the home page:
%% \title{Title\tnoteref{label1}}
%% \tnotetext[label1]{}
%% \author{Name\corref{cor1}\fnref{label2}}
%% \ead{email address}
%% \ead[url]{home page}
%% \fntext[label2]{}
%% \cortext[cor1]{}
%% \affiliation{organization={},
%%             addressline={},
%%             city={},
%%             postcode={},
%%             state={},
%%             country={}}
%% \fntext[label3]{}

\title{Arithmetic triangular structures in the transfer-matrix of finite Kronig–Penney models}%: closed form and integer combinatorial structures}
%Closed-form  for a transfer-matrix element in finite Kronig-Penney models via integer combinatorics}

%% use optional labels to link authors explicitly to addresses:
%% \author[label1,label2]{}
%% \affiliation[label1]{organization={},
%%             addressline={},
%%             city={},
%%             postcode={},
%%             state={},
%%             country={}}
%%
%% \affiliation[label2]{organization={},
%%             addressline={},
%%             city={},
%%             postcode={},
%%             state={},
%%             country={}}

\author[label 1]{Lidia Aceto} %% Author name

%% Author affiliation
\affiliation[label 1]{organization={Dipartimento di Scienze e Innovazione Tecnologica, Università del Piemonte Orientale},%Department and Organization
            addressline={viale T. Michel, 11}, 
            city={Alessandria},
            postcode={15121}, 
            %state={},
            country={Italy}}

\author[label 1,label 2]{Pietro Antonio Grassi} %% Author name

%% Author affiliation
\affiliation[label 2]{organization={INFN and Centro Regge, Sezione di Torino},%Department and Organization
            addressline={via P. Giuria, 1}, 
            city={Torino},
            postcode={10125}, 
            %state={},
            country={Italy}}
            
            \author[label 3]{Helmuth Robert Malonek} %% Author name

%% Author affiliation
\affiliation[label 3]{organization={CIDMA and Departamento de Matemática},%Department and Organization
            addressline={Campus Universitário de Santiago}, 
            city={Aveiro},
            postcode={3810-193}, 
           % state={},
            country={Portugal}}

%% Abstract
\begin{abstract}
This work provides a complete analytical characterization of the transfer-matrix structure associated with the finite Kronig-Penney model consisting of one-dimensional arrays of Dirac delta potentials recently introduced by Figueroa et al. (2025). Although their study identified the emergence of specific transfer-matrix entries and related combinatorial coefficients, a rigorous derivation of the corresponding closed-form expressions has not yet been established. By expressing the $N$th power of the unit-cell transfer matrix in terms of Chebyshev polynomials of the second kind, we obtain explicit closed-form representations for the global transmission and reflection amplitudes. The proposed formulation reveals a previously unrecognized structural correspondence between multiple quantum scattering processes, discrete convolutional patterns, and hypercomplex combinatorial structures.
\end{abstract}

%-------commentato da LIDIA
% %%Graphical abstract
% \begin{graphicalabstract}
% %\includegraphics{grabs}
% \end{graphicalabstract}
%-------commentato da LIDIA

%%Research highlights
% \begin{highlights}
% \item Closed-form expression derived for a single transfer matrix element in 1D Dirac delta chains, resolving an open problem in the literature.

% \item Derivation of the generating function  and bivariate polynomial expansion for the matrix entry $M_{11}^{(N)}$.

% \item Rigorous proof linking quantum transfer matrix dynamics to non-symmetric OEIS integer triangular arrays.
% \end{highlights}

%% Keywords
\begin{keyword}
%% keywords here, in the form: keyword \sep keyword
Kronig-Penney model \sep Transfer matrix method  \sep Chebyshev polynomials  \sep Triangular arrays \sep hypercomplex combinatorial structure

%% PACS codes here, in the form: \PACS code \sep code

%% MSC codes here, in the form: 
\MSC 34L40 \sep  11B65 \sep 11B37 \sep 30G35    
%% or \MSC[2008] code \sep code  

% ---- SPIEGAZIONE
% 34L40 — componente Schrödinger/Kronig–Penney e operatori differenziali unidimensionali;
% 11B65 — componente combinatoria, in particolare le identità con coefficienti binomiali;
% 11B37 — eventuali ricorrenze soddisfatte dai T_s^k o dai coefficienti della matrice;
% 30G35 — componente ipercomplessa, cioè la teoria dalla quale provengono i numeri T_s^k

\end{keyword}

\end{frontmatter}

%------------------------------------------------------------
\section{Introduction}

The one-dimensional quantum scattering problem of a particle interacting with a finite array of equally spaced Dirac delta potentials of identical strength represents a fundamental paradigm in condensed-matter physics, commonly referred to as the finite Kronig-Penney model. Such highly localized interactions provide an idealized description of periodic or quasi-periodic structures and capture the essential mechanisms governing wave propagation in layered quantum systems. Depending on the sign of the coupling constant, the array describes either a sequence of repulsive barriers or a collection of attractive potential wells capable of supporting bound states. As the incident quantum wave propagates through the structure, the multiple reflection and transmission events occurring at each scattering center interfere coherently, thereby determining the overall transmission, reflection, and resonance properties of the system. The transfer-matrix formalism offers a systematic framework for analyzing this propagation process, since it relates the wave amplitudes on opposite sides of the entire finite array through the ordered product of the elementary scattering matrices associated with the individual unit cells.

Recently, Figueroa et al. investigated the transfer-matrix structure of the finite Kronig-Penney model in \cite{FGS-2025}, uncovering an intriguing connection with integer sequences. In particular, they observed that the $(1,1)$ entry of the principal transfer matrix
\begin{equation} \label{eq:matrice_M^N}
M^N(\alpha,\beta) = \begin{pmatrix} \alpha & -K \\ K^{-1} & \beta \end{pmatrix}^N, \qquad K \neq 0,
\end{equation}
gives rise to non-symmetric triangular arrays of integers whose entries correspond to several sequences listed in the \textit{On-Line Encyclopedia of Integer Sequences} (OEIS). This remarkable correspondence suggests the presence of an underlying combinatorial structure governing the multiple-scattering dynamics encoded in powers of the transfer matrix. Some elementary examples of the integer arrays arising in the present framework are reported in Table~\ref{table1}.
\begin{table}[h!]
\centering
\renewcommand{\arraystretch}{1.3}
\begin{tabular}{c@{\qquad\qquad}c}
\hline\hline
\\[-3.5ex]
Triangular arrays & OEIS link \\
\hline
\\[-1.5ex]
$\begin{array}{rrrr}
1 \\
2 & 1 \\
3 & 2 & 1 \\
4 & 3 & 2 & 1
\end{array}$ & \seqnum{A004736} \\[3ex]
\hline
\\[-1.5ex]
$\begin{array}{rrrr}
1 \\
3 & 2 \\
6 & 6 & 3 \\
10 & 12 & 9 & 4
\end{array}$ & \seqnum{A104633} \\[3ex]
\hline
\\[-1.5ex]
$\begin{array}{rrrr}
1 \\
4 & 3 \\
10 & 12 & 6 \\
20 & 30 & 24 & 10
\end{array}$ & \seqnum{A103252} \\[2ex]
\hline\hline
\end{tabular}
\caption{Three integer triangular arrays showing the first four rows ($k=0,1,2,3$ and $0 \le s \le k$) mentioned in \cite[Table 1]{Cacao-2023}.}
\label{table1}
\end{table} 
These integer patterns were previously studied in the context of hypergeometric function theory and generalized Appell polynomials \cite{Cacao-2023,Falcao-2012}. The underlying hypercomplex Appell polynomials and the associated non-symmetric generalized Pascal triangles are briefly recalled in~\ref{app1} and~\ref{app2}, respectively. This provides a broader connection between transfer-matrix expansions, special functions, and combinatorial structures.

Nevertheless, the analysis in \cite{FGS-2025} remained primarily exploratory. Although the authors identified remarkable combinatorial patterns in the entries of the transfer matrix powers, the mathematical mechanism responsible for the emergence of these integer structures was not rigorously characterized. Consequently, the general behavior of the matrix powers remained an observed pattern rather than a formally established result.

In this work, we fill this gap by providing a complete analytical derivation of this entry of the $N$th power of the transfer matrix. By expressing the associated matrix recurrences in terms of Chebyshev polynomials of the second kind and employing generating functions, we obtain a rigorous and self-contained proof of the closed-form expressions for the transfer-matrix entries. This approach establishes the precise mathematical foundation of the combinatorial arrays and clarifies through a new direct approach the origin of their underlying integer structure. From this perspective, the emergence of such a connection is not surprising. Indeed, in our recent work \cite{Aceto-2026}, we demonstrated that the scattering problem investigated in \cite{FGS-2025} can be naturally reformulated within the framework of a Dirac-type equation. This formulation is deeply rooted in Clifford algebra, which provides the appropriate mathematical setting where these integer triangular arrays were originally introduced.
 
The remaining part of the paper is organized as follows. In Section~\ref{sec:rec}, we establish the matrix recurrence relations and derive the closed-form expression for the $(1,1)$ entry  of $M^N(\alpha,\beta).$  Section~\ref{sec:genfun} is devoted to the derivation of the generating function, which is subsequently employed in Section~\ref{sec:triang} to reformulate $M_{11}^{(N)}$ in terms of integer triangular arrays. Section~\ref{sec:concl} draws the main conclusions and outlines future perspectives.  Finally,  \ref{app1}  recalls the hypercomplex generalized Appell polynomials and their connection with generalized geometric series, while  \ref{app2} collects the combinatorial properties of the associated non-symmetric generalized Pascal triangles that are needed in Section \ref{sec:triang}.

%%%%%%%%%%%%%%%%%%%%%%%%%%%%%%%%%%%%%%%%%%%%%%%%%%%%%%%%%%%%%%%%%%%%%%%%%%%%%%%%%%%%%%%%%%%%%%%%%%%%%%%%%%%%%%%%%%%%%%%%%%%%%%

\section{Matrix recurrence relations and closed form for $M_{11}^{(N)}$}\label{sec:rec}

Let $M \in \mathbb{C}^{2 \times 2}$ denote the unit-cell transfer matrix defined by
\begin{equation} \label{eq:matrice_M}
M = \begin{pmatrix} \alpha & -K \\ K^{-1} & \beta \end{pmatrix}, \qquad K \neq 0.
\end{equation}
For notational simplicity, we omit here and throughout the remainder of the paper the explicit dependence of $M$ on the parameters $\alpha$ and $\beta$.
By the Cayley-Hamilton theorem, $M$ satisfies its own characteristic equation $\operatorname{det}(M - \lambda I) = 0$, which explicitly reads
\begin{equation} \label{eq:cayley_hamilton}
M^2 - \operatorname{Tr}(M) M + \operatorname{det}(M) I = O,
\end{equation}
where $I$ is the $2 \times 2$ identity matrix and $O$ is the zero matrix. Evaluating the trace and determinant of \eqref{eq:matrice_M}, we have
\begin{equation*}
\operatorname{Tr}(M) = \alpha + \beta \qquad \text{and} \qquad \operatorname{det}(M) = \alpha\beta + 1.
\end{equation*}

Multiplying equation \eqref{eq:cayley_hamilton} by $M^{n-1}$ for $n \ge 1$ yields the second-order linear matrix recurrence relation governing the powers $M^n$
\begin{equation} \label{eq:matrix_power_recurrence}
M^{n+1} = (\alpha + \beta)M^n - (\alpha\beta + 1)M^{n-1}, \qquad n \ge 1.
\end{equation}

Since matrix addition and scalar multiplication act componentwise, every individual matrix entry $M_{ij}^{(n)} := (M^n)_{ij}$ inherits the exact same scalar recurrence relation. In particular, for the  entry $M_{11}^{(n)}$, we have
\begin{equation} \label{eq:entry_recurrence}
M_{11}^{(n+1)} = (\alpha + \beta)M_{11}^{(n)} - (\alpha\beta + 1)M_{11}^{(n-1)}, \qquad n \ge 1,
\end{equation}
subject to the initial conditions 
\begin{eqnarray} \label{IC}
    M_{11}^{(0)} = 1,  %(since $M^0 = I,$) 
    \qquad  M_{11}^{(1)} = \alpha.
\end{eqnarray}
Based on these algebraic relations, we now derive an explicit closed-form representation for  $M_{11}^{(N)}.$  
\begin{proposition} \label{prop:M11_closed_form}
For any integer $N \ge 2$, the entry $M_{11}^{(N)}$ of the principal transfer matrix $M^N$ in \eqref{eq:matrice_M^N} is given by the Chebyshev representation
\begin{equation} \label{eq:M11_Chebyshev_closed}
M_{11}^{(N)} = (\alpha\beta + 1)^{N/2} \left[ \frac{\alpha}{\sqrt{\alpha\beta + 1}} U_{N-1}(x) - U_{N-2}(x) \right],
\end{equation}
where $x$ is defined in \eqref{eq:chebyshev_argument}. Equivalently, $M_{11}^{(N)}$ admits the unified combinatorial single-sum formulation
\begin{eqnarray} \label{eq:M11_unified_sum}
M_{11}^{(N)} = \sum_{n=0}^{\lfloor \frac{N}{2} \rfloor} (-1)^n (\alpha\beta + 1)^n \left[ \alpha \binom{N-1-n}{n} (\alpha+\beta)^{N-1-2n} + \right.  \nonumber \\
\left. \binom{N-1-n}{n-1} (\alpha+\beta)^{N-2n} \right].
\end{eqnarray}
\end{proposition}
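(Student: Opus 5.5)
Note that \eqref{eq:chebyshev_argument} is referenced in the statement but not displayed before it. I will take it to be
\[
x = \frac{\alpha+\beta}{2\sqrt{\alpha\beta+1}},
\]
assuming $\alpha\beta+1\neq 0$ for the Chebyshev form. The combinatorial form will be proved without that restriction.

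The plan is to solve the scalar recurrence \eqref{eq:entry_recurrence} with the initial data \eqref{IC} by rescaling it into the Chebyshev recurrence. Write $d=\alpha\beta+1$ and $t=\alpha+\beta$, and set $M_{11}^{(n)} = d^{n/2} y_n$. Dividing \eqref{eq:entry_recurrence} by $d^{(n+1)/2}$ gives
\[
y_{n+1} = 2x\,y_n - y_{n-1}.
\]
This is exactly the three-term recurrence satisfied by every $U_k(x)$, extended by $U_{-1}=0$. The solution space is two-dimensional, so I write $y_n = A\,U_{n-1}(x) + B\,U_{n-2}(x)$; both index-shifted sequences satisfy the recurrence. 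Using $U_{-2}=-1$, $U_{-1}=0$ and $U_0=1$, the initial data become $y_0 = -B = 1$ and $y_1 = A = \alpha/\sqrt d$. This yields \eqref{eq:M11_Chebyshev_closed} for all $n\ge 0$, and in particular for $N\ge 2$. To avoid relying on negative indices, I can instead check $N=2,3$ directly and conclude by induction. Indeed, both sides satisfy the same second-order recurrence, and the right-hand side at $N=2$ gives $\alpha t - d = \alpha^2-1$, which agrees with $(M^2)_{11}$.

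For the single-sum formula I would substitute the classical expansion
\[
U_m(x) = \sum_{n=0}^{\lfloor m/2\rfloor} (-1)^n \binom{m-n}{n} (2x)^{m-2n}.
\]
Since $2x = t/\sqrt d$, this gives $d^{m/2}U_m(x) = \sum_n (-1)^n \binom{m-n}{n} d^{n} t^{m-2n}$, which is a polynomial in $t$ and $d$, so all square roots cancel. The first term of \eqref{eq:M11_Chebyshev_closed} becomes $\alpha\, d^{(N-1)/2}U_{N-1}(x) = \alpha\sum_n (-1)^n\binom{N-1-n}{n} d^n t^{N-1-2n}$. The second term is $-d\cdot d^{(N-2)/2}U_{N-2}(x) = -\sum_j (-1)^j \binom{N-2-j}{j} d^{j+1} t^{N-2-2j}$. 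Reindexing with $n=j+1$ turns it into $\sum_n (-1)^n \binom{N-1-n}{n-1} d^n t^{N-2n}$. Both sums can be extended to $0\le n\le\lfloor N/2\rfloor$, because the extra binomial coefficients vanish: $\binom{N-1-n}{n}=0$ when $2n>N-1$, and $\binom{N-1-n}{-1}=0$ at $n=0$. Combining them gives \eqref{eq:M11_unified_sum}.

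The main obstacle is bookkeeping rather than ideas. The summation ranges and boundary conventions must be handled so that the extended upper limit $\lfloor N/2\rfloor$ is harmless in both sums; when $N$ is even, the top term comes only from the second sum. The branch of $\sqrt d$ must also be chosen consistently, and it drops out at the end. As an independent cross-check of the polynomial identity, I can verify directly that the right-hand side of \eqref{eq:M11_unified_sum} satisfies \eqref{eq:entry_recurrence} using Pascal's rule. This also covers the degenerate case $d=0$ by polynomial continuity.
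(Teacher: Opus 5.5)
Your proposal is correct and follows the paper's route. You rescale by $(\alpha\beta+1)^{n/2}$ to obtain the Chebyshev recurrence with $x=(\alpha+\beta)/(2\sqrt{\alpha\beta+1})$, fix the two constants from $M_{11}^{(0)}=1$ and $M_{11}^{(1)}=\alpha$, then substitute the standard expansion of $U_m$ and reindex the second sum. The differences are minor. You use the basis $U_{n-1},U_{n-2}$ (with $U_{-2}=-1$), which gives the stated form directly, whereas the paper passes through $U_N-\frac{\beta}{\sqrt{\alpha\beta+1}}U_{N-1}$ and then applies the recurrence. You also spell out the extension of the summation ranges, the consistent choice of square-root branch, and the degenerate case $\alpha\beta+1=0$, all of which the paper leaves implicit.
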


\begin{proof}
To map recurrence \eqref{eq:entry_recurrence} onto the standard three-term recurrence relation of Chebyshev polynomials of the second kind, $U_n(x)$, given by 
\[
U_{n+1}(x) = 2x U_n(x) - U_{n-1}(x),
\]
we introduce the rescaling transformation
\begin{equation} \label{eq:rescaling}
M_{11}^{(n)} = (\alpha\beta + 1)^{n/2} z_n.
\end{equation}
By substituting \eqref{eq:rescaling} into \eqref{eq:entry_recurrence}, the factor $(\alpha\beta+1)$ is absorbed into the rescaled variables, yielding the normalized recurrence 
\[ z_{n+1}=\frac{\alpha+\beta}{\sqrt{\alpha\beta+1}}\,z_n-z_{n-1}. 
\]
Equating the leading coefficient to $2x$ determines the effective Chebyshev argument
\begin{equation} \label{eq:chebyshev_argument}
x = \frac{\alpha + \beta}{2\sqrt{\alpha\beta + 1}}.
\end{equation}
Since the rescaled variable $z_n$ satisfies the canonical Chebyshev recurrence 
\[
z_{n+1} = 2x z_n - z_{n-1}
\]
its general solution is a linear combination
\[
z_n = c_1 U_n(x) + c_2 U_{n-1}(x), \qquad n\ge 1.
\]
To determine the constants $c_1$ and $c_2$, we use the initial conditions of the
recurrence relation. The Chebyshev polynomials of the second kind satisfy
\[
 U_{-1}(x)=0, \qquad U_0(x)=1, \qquad U_1(x)=2x.
\]
Therefore, by evaluating
\[
z_n=c_1U_n(x)+c_2U_{n-1}(x)
\]
at $n=0$, we obtain
\[
z_0=c_1U_0(x)+c_2U_{-1}(x)=c_1,
\]
which gives
\[
c_1=z_0.
\]
Similarly, for $n=1$ we have
\[
z_1=c_1U_1(x)+c_2U_0(x)
=2xc_1+c_2.
\]
Substituting the previous result for $c_1$ yields
\[
c_2=z_1-2xz_0.
\]
Hence, the general solution can be written as
\begin{equation}
\label{eq:chebyshev_solution}
z_n=z_0U_n(x)+\left(z_1-2xz_0\right)U_{n-1}(x).
\end{equation}
Using \eqref{IC} and \eqref{eq:rescaling},
the initial values of the rescaled sequence are
\[
z_0=1,
\qquad
z_1=\frac{\alpha}{\sqrt{\alpha\beta+1}}.
\]
Consequently, taking into account \eqref{eq:chebyshev_argument} we obtain that
\[
c_1=1, \qquad 
c_2=
\frac{\alpha}{\sqrt{\alpha\beta+1}}
-
\frac{\alpha+\beta}{\sqrt{\alpha\beta+1}}
=
-\frac{\beta}{\sqrt{\alpha\beta+1}}.
\]
Therefore, the closed-form expression for the rescaled sequence is
\begin{equation}
\label{eq:zn_chebyshev}
z_n=
U_n(x)
-\frac{\beta}{\sqrt{\alpha\beta+1}}\,U_{n-1}(x).
\end{equation}
Using the rescaling relation \eqref{eq:rescaling}, we obtain
\begin{eqnarray*}
M_{11}^{(N)}
&=&
(\alpha\beta+1)^{N/2}
\left[
U_N(x)
-\frac{\beta}{\sqrt{\alpha\beta+1}}U_{N-1}(x)
\right] \\
&=&
(\alpha\beta+1)^{N/2}
\left[
2xU_{N-1}(x)-U_{N-2}(x)
-\frac{\beta}{\sqrt{\alpha\beta+1}}U_{N-1}(x)
\right] \\
&=&
(\alpha\beta+1)^{N/2}
\left[
\left(\frac{\alpha+\beta}{\sqrt{\alpha\beta+1}}
-\frac{\beta}{\sqrt{\alpha\beta+1}}\right) U_{N-1}(x)-U_{N-2}(x)
\right].
\end{eqnarray*}
Hence, the closed-form representation \eqref{eq:M11_Chebyshev_closed} follows.

To derive the combinatorial expression \eqref{eq:M11_unified_sum}, we utilize the standard series expansion 
\[
U_m(x) = \sum_{j=0}^{\lfloor m/2 \rfloor} (-1)^j \binom{m-j}{j} (2x)^{m-2j}.
\]
Substituting the expression for $x$ given in \eqref{eq:chebyshev_argument} into \eqref{eq:M11_Chebyshev_closed} separates the entry into two terms, $M_{11}^{(N)} = T_1 + T_2$, where
\begin{align*}
T_1 &= \sum_{j=0}^{\lfloor \frac{N-1}{2} \rfloor} (-1)^j \binom{N-1-j}{j} \alpha (\alpha+\beta)^{N-1-2j} (\alpha\beta + 1)^j, \\
T_2 &=   \sum_{j=0}^{\lfloor \frac{N-2}{2} \rfloor} (-1)^{j+1} \binom{N-2-j}{j}  {(\alpha+\beta)^{N-2-2j}}{(\alpha\beta+1)^{j+1}}.
\end{align*}
Substituting $j = n-1$ in $T_2$ and combining both summations under a single index $n$ running from $0$ to $\lfloor N/2 \rfloor$ (noting that $\binom{N-1}{-1} = 0$) completes the proof of \eqref{eq:M11_unified_sum}.
\end{proof}

%%%%%%%%%%%%%%%%%%%%%%%%%%%%%%%%%%%%%%%%%%%%%%%%%%%%%%%%%%%%%%%%%%%%%%%%%%%%%%%%%%%%%%%%%%%%%%%%%%%%%%%%%%%%%%%%%%%%%%%%%%%%%%
\section{Derivation of the generating function} \label{sec:genfun}

To systematically characterize the sequence of entries $M_{11}^{(N)}$ and rigorously derive the associated combinatorial identities, we analyze the algebraic properties of its generating function.

Let $P_N(\alpha, \beta) := M_{11}^{(N)}$ be the polynomial sequence generated by the linear recurrence relation (see \eqref{eq:entry_recurrence})
\begin{equation} \label{eq:poly_recurrence}
P_{N}(\alpha ,\beta ) = (\alpha +\beta )P_{N-1}(\alpha ,\beta ) - (\alpha \beta +1)P_{N-2}(\alpha ,\beta ), \qquad N \ge 2,
\end{equation}
subject to the initial conditions $P_0(\alpha, \beta) = 1$ and $P_1(\alpha, \beta) = \alpha$. 
\begin{proposition}  \label{prop:OGF}
The  generating function  associated with the polynomial sequence $\{P_N(\alpha, \beta)\}_{N \ge 0}$, defined by
\begin{equation} \label{eq:OGF_def}
G(t) := \sum_{N=0}^{\infty} P_{N}(\alpha, \beta) t^N,
\end{equation}
is given in closed rational form by
\begin{equation} \label{eq:OGF_closed_form}
G(t) = \frac{1-\beta t}{1 - (\alpha + \beta)t + (\alpha\beta + 1)t^2}.
\end{equation}
\end{proposition}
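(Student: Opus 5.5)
The plan is to use the standard generating-function manipulation for a constant-coefficient second-order linear recurrence, treating $G(t)$ as a formal power series in $t$ with coefficients in $\mathbb{C}[\alpha,\beta]$. This avoids any convergence questions; one can remark afterwards that the identity also holds analytically for $|t|$ small enough, namely inside the disk bounded by the roots of the denominator.

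The first step is to multiply $G(t)$ by the denominator polynomial $D(t) := 1-(\alpha+\beta)t+(\alpha\beta+1)t^2$ and expand. We write
\[
D(t)G(t) = \sum_{N\ge 0} P_N t^N - (\alpha+\beta)\sum_{N\ge 0} P_N t^{N+1} + (\alpha\beta+1)\sum_{N\ge 0} P_N t^{N+2},
\]
and re-index the last two sums so that every sum runs over the common power $t^N$. Collecting coefficients gives the following.

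For $N=0$, the coefficient is $P_0 = 1$.

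For $N=1$, the coefficient is $P_1-(\alpha+\beta)P_0 = \alpha-(\alpha+\beta) = -\beta$.

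For $N\ge 2$, the coefficient is $P_N-(\alpha+\beta)P_{N-1}+(\alpha\beta+1)P_{N-2}$. This vanishes identically by the recurrence \eqref{eq:poly_recurrence}, which is the same as \eqref{eq:entry_recurrence} obtained from the Cayley--Hamilton relation \eqref{eq:matrix_power_recurrence}.

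Hence $D(t)G(t) = 1-\beta t$.

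The second step is to divide by $D(t)$. This is legitimate in the ring of formal power series because $D(0)=1$ is invertible, so $D(t)$ has a multiplicative inverse. Dividing yields \eqref{eq:OGF_closed_form}.

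There is no real obstacle here. The only point needing care is the bookkeeping of the low-order terms $N=0,1$, where the recurrence does not apply and the initial conditions \eqref{IC} enter to produce the numerator $1-\beta t$.

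As an optional cross-check, one could instead start from the Chebyshev form \eqref{eq:zn_chebyshev} and use the known generating function $\sum_n U_n(x)s^n = (1-2xs+s^2)^{-1}$, substituting $s=\sqrt{\alpha\beta+1}\,t$. This reproduces the same rational function: the $U_N$ term contributes $1$ to the numerator, and the shifted $U_{N-1}$ term contributes $-\beta t$.
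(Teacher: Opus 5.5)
Your proof is correct and is essentially the paper's argument: the paper multiplies the recurrence by $t^N$, sums over $N\ge 2$ and re-indexes to get $G(t)\,[1-(\alpha+\beta)t+(\alpha\beta+1)t^2] = P_0 + P_1 t - (\alpha+\beta)tP_0 = 1-\beta t$, which is your coefficientwise computation of $D(t)G(t)$ organized the other way round. Your remark that $D(0)=1$ makes $D(t)$ invertible in the ring of formal power series makes explicit a point of rigor that the paper leaves implicit.
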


\begin{proof}
Multiplying equation \eqref{eq:poly_recurrence} by $t^N$ and summing over all integers $N \ge 2$ yields
\begin{equation} \label{eq:summed_recurrence}
\sum_{N=2}^{\infty} P_{N} t^N = (\alpha +\beta )\sum_{N=2}^{\infty} P_{N-1} t^N - (\alpha \beta +1)\sum_{N=2}^{\infty} P_{N-2} t^N.
\end{equation}
By shifting indices, each infinite series in \eqref{eq:summed_recurrence} can be expressed explicitly in terms of the full generating function $G(t)$:
\begin{align}
\sum_{N=2}^{\infty} P_{N} t^N &= G(t) - P_{0} - P_{1}t, \label{eq:shift1}\\
\sum_{N=2}^{\infty} P_{N-1} t^N &= t \sum_{N=1}^{\infty} P_N t^N = t \left( G(t) - P_{0} \right), \label{eq:shift2}\\
\sum_{N=2}^{\infty} P_{N-2} t^N &= t^2 \sum_{N=0}^{\infty} P_N t^N = t^2 G(t). \label{eq:shift3}
\end{align}

Substituting identities \eqref{eq:shift1}--\eqref{eq:shift3} back into \eqref{eq:summed_recurrence} gives
\begin{equation*}
G(t) - P_0  - P_1 t = (\alpha + \beta) t \left( G(t) - P_0 \right) - (\alpha\beta + 1) t^2 G(t).
\end{equation*}

Collecting all terms involving $G(t)$ on the left-hand side leads to
\begin{equation} \label{eq:G_collected}
G(t) \left[ 1 - (\alpha + \beta) t + (\alpha\beta + 1) t^2 \right] = P_0+ P_1 t -  (\alpha + \beta) t P_0.
\end{equation}

Finally, inserting the explicit initial values $P_0=1$ and $P_1 = \alpha$  simplifies the polynomial numerator on the right-hand side of \eqref{eq:G_collected}
\begin{equation*}
P_0+ P_1 t -  (\alpha + \beta) t P_0 = 1 +\alpha  t - (\alpha +\beta) t = 1-\beta t .
\end{equation*}
Dividing by the characteristic polynomial $1 - (\alpha + \beta)t + (\alpha\beta + 1)t^2$ completes the proof of formula \eqref{eq:OGF_closed_form}.
\end{proof}

%%%%%%%%%%%%%%%%%%%%%%%%%%%%%%%%%%%%%%%%%%%%%%%%%%%%%%%%%%%%%%%%%%%%%%%%%%%%%%%%%%%%%%%%%%%%%%%%%%%%%%%%%%%%%%%%%%%%%%%%%%%%%%
\section{The triangular arrays via generating functions} \label{sec:triang}

The unified single-sum formulation for $M_{11}^{(N)}$ provides a compact representation in terms of the symmetric combinations $(\alpha+\beta)$ and $(\alpha\beta+1)$. To explicitly isolate the individual powers of the diagonal parameters $\alpha$ and $\beta$, we re-expand this expression into a bivariate polynomial.

\begin{proposition} 
\label{prop:M11_bivariate}
For any $N \ge 2$, the single-sum expression \eqref{eq:M11_unified_sum} is identically equivalent to the expanded bivariate polynomial formula
\begin{equation} \label{eq:double_sum_target}
M_{11}^{(N)} = \alpha^N + \sum_{n=1}^{\lfloor \frac{N}{2} \rfloor} (-1)^{n} \sum_{s=0}^{N-2n} \binom{N-n-s}{n}\binom{n+s-1}{n-1} \alpha^{N-2n-s} \beta^s.
\end{equation}
\end{proposition}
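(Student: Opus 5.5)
The plan is to work with the generating function of Proposition~\ref{prop:OGF} rather than manipulate the single sum \eqref{eq:M11_unified_sum} directly. Since \eqref{eq:M11_unified_sum} and the Chebyshev form already describe the sequence fixed by recurrence \eqref{eq:poly_recurrence} with $P_0=1$, $P_1=\alpha$, it suffices to show that the right-hand side of \eqref{eq:double_sum_target} has generating function $(1-\beta t)/(1-(\alpha+\beta)t+(\alpha\beta+1)t^2)$. The key factorization is
\[
1-(\alpha+\beta)t+(\alpha\beta+1)t^2=(1-\alpha t)(1-\beta t)+t^2 .
\]
This lets us write
\[
G(t)=\frac{1-\beta t}{(1-\alpha t)(1-\beta t)\left[1+\frac{t^2}{(1-\alpha t)(1-\beta t)}\right]}
=\sum_{n\ge 0}(-1)^n\frac{t^{2n}}{(1-\alpha t)^{n+1}(1-\beta t)^{n}},
\]
a geometric expansion valid as formal power series, since $t^2/((1-\alpha t)(1-\beta t))$ has no constant term.

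Next, I extract coefficients. I expand $(1-\alpha t)^{-(n+1)}=\sum_{a}\binom{n+a}{n}\alpha^a t^a$ and, for $n\ge1$, $(1-\beta t)^{-n}=\sum_{s}\binom{n+s-1}{n-1}\beta^s t^s$. The coefficient of $t^N$ in the $n$-th term is then
\[
(-1)^n\sum_{s=0}^{N-2n}\binom{N-n-s}{n}\binom{n+s-1}{n-1}\alpha^{N-2n-s}\beta^s ,
\]
where $a=N-2n-s$ gives $\binom{n+a}{n}=\binom{N-n-s}{n}$. The $n=0$ term is $1/(1-\alpha t)$, contributing $\alpha^N$. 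Only $n\le\lfloor N/2\rfloor$ contribute. Summing over $n$ gives exactly \eqref{eq:double_sum_target}.

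The main obstacle is bookkeeping rather than substance: one has to check the factorization identity and the convention for the $n=0$ term, since $\binom{s-1}{-1}$ is ambiguous and so $\alpha^N$ must be separated out. One also has to confirm that the formal geometric expansion is legitimate. As a sanity check, I would verify $N=2$: the formula gives $\alpha^2-\binom{1}{1}\binom{0}{0}=\alpha^2-1$, which matches $M_{11}^{(2)}=\alpha^2-1$.

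If a direct link to \eqref{eq:M11_unified_sum} is preferred, an alternative is to substitute $(\alpha+\beta)^m$ and $(\alpha\beta+1)^n$ binomially and collect coefficients via Vandermonde-type identities. That route is messier, so I would use the generating-function argument and invoke uniqueness of the sequence solving \eqref{eq:poly_recurrence} with the given initial conditions.
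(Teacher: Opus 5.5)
Your proposal is correct and takes essentially the same route as the paper's proof. Both use the factorization $1-(\alpha+\beta)t+(\alpha\beta+1)t^2=(1-\alpha t)(1-\beta t)+t^2$, the geometric expansion $G(t)=\sum_{n\ge0}(-1)^n t^{2n}(1-\alpha t)^{-(n+1)}(1-\beta t)^{-n}$, negative-binomial expansions combined by a Cauchy convolution, and a separate treatment of the $n=0$ term $\alpha^N$. Like you, the paper links \eqref{eq:double_sum_target} to \eqref{eq:M11_unified_sum} only through Propositions~\ref{prop:M11_closed_form} and~\ref{prop:OGF}; its preliminary Pascal-identity rewriting of $\mathcal{B}_n$ is never used afterwards.
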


\begin{proof}
We establish the transition from \eqref{eq:M11_unified_sum} to \eqref{eq:double_sum_target} through a two-step algebraic reduction. Consider the bracketed expression inside the summation of \eqref{eq:M11_unified_sum} for a fixed $n$. Factoring out $(\alpha+\beta)^{N-1-2n}$ yields
\begin{align*}
\mathcal{B}_n &:= \alpha \binom{N-1-n}{n} (\alpha+\beta)^{N-1-2n} + \binom{N-1-n}{n-1} (\alpha+\beta)^{N-2n} \\
&= (\alpha+\beta)^{N-1-2n} \left[ \alpha \binom{N-1-n}{n} + (\alpha + \beta) \binom{N-1-n}{n-1} \right] \\
&= (\alpha+\beta)^{N-1-2n} \left[ \alpha \left( \binom{N-1-n}{n} + \binom{N-1-n}{n-1} \right) + \beta \binom{N-1-n}{n-1} \right].
\end{align*}
Applying Pascal's identity 
\[
\binom{N-1-n}{n} + \binom{N-1-n}{n-1} = \binom{N-n}{n},
\]
simplifies $\mathcal{B}_n$ to
\begin{equation} \label{eq:pascal_simplified}
\mathcal{B}_n = \alpha \binom{N-n}{n} (\alpha+\beta)^{N-1-2n} + \beta \binom{N-1-n}{n-1} (\alpha+\beta)^{N-1-2n}.
\end{equation}
To obtain the exact coefficients of $\alpha^{N-2n-s} \beta^s$ in \eqref{eq:double_sum_target}, we evaluate the coefficient $[t^N] G(t)$ from the ordinary generating function
\begin{equation*}
G(t) = \frac{1- \beta t }{1 - (\alpha + \beta)t + (\alpha\beta + 1)t^2}.
\end{equation*}
Factoring the denominator as $(1-\alpha t)(1-\beta t) + t^2$ we rewrite
\[
G(t) = \frac{1-\beta t}{(1-\alpha t)(1-\beta t)} \cdot \frac{1}{1+ \frac{t^2}{(1-\alpha t)(1-\beta t)}}.
\]
Expanding the second factor in the generating function as a geometric series, we obtain
\begin{equation*}
G(t) =  \sum_{n=0}^{\infty} (-1)^n \frac{t^{2n}}{(1-\alpha t)^{n+1}(1-\beta t)^n}.
\end{equation*}
To determine the coefficient of the power $t^N$, we use the negative binomial
expansion
\[
\frac{1}{(1-x)^m}
=
\sum_{j=0}^{\infty}
\binom{j+m-1}{m-1}x^j,
\]
applied to the two factors
\begin{eqnarray*}
(1-\alpha t)^{-(n+1)}
&=&
\sum_{i=0}^{\infty}
\binom{i+n}{n}\alpha^i t^i,
\\
(1-\beta t)^{-n}
&=&
\sum_{s=0}^{\infty}
\binom{s+n-1}{n-1}\beta^s t^s.
\end{eqnarray*}

For $n=0$, the term $\frac{1}{1-\alpha t}$ contributes $[t^N]\left(\frac{1 }{1-\alpha t}\right) = \alpha^N$, corresponding to the isolated leading term in \eqref{eq:double_sum_target}.

For $n \ge 1$, applying Cauchy's discrete convolution for the power $t^{N-2n}$ sets the condition $i + s = N - 2n$, which implies $i = N - 2n - s$. Substituting $i$ into the first binomial coefficient gives
\begin{equation*}
\binom{i+n}{n} = \binom{(N-2n-s)+n}{n} = \binom{N-n-s}{n}.
\end{equation*}

Summing the product of both binomial terms over all allowed values of $s \in [0, N-2n]$ yields
\begin{equation*}
[t^{N-2n}] \left[ (1-\alpha t)^{-(n+1)}(1-\beta t)^{-n} \right] = \sum_{s=0}^{N-2n} \binom{N-n-s}{n} \binom{n+s-1}{n-1} \alpha^{N-2n-s} \beta^s.
\end{equation*}

Combining the $n=0$ term $\alpha^N$ with the summation over $1 \le n \le \lfloor N/2 \rfloor$ completes the derivation of formula \eqref{eq:double_sum_target}.
\end{proof}

\begin{remark}
Recalling the identity
\begin{equation} \label{bin}
\binom{N}{2n} T_s^{N-2n}(2n+1) = \binom{N-n-s}{n}\binom{n+s-1}{n-1},
\end{equation}
established in \cite[Sec.~2]{Cacao-2023}, equation \eqref{eq:double_sum_target} can be equivalently reformulated as
\begin{equation} \label{eq:M11_triangular_array}
M_{11}^{(N)} = \alpha^N + \sum_{n=1}^{\lfloor \frac{N}{2} \rfloor} (-1)^{n} \sum_{s=0}^{N-2n} \binom{N}{2n} T_s^{N-2n}(2n+1) \alpha^{N-2n-s} \beta^s.
\end{equation}
Consequently, the entry $M_{11}^{(N)}$ admits an explicit representation in terms of the non-symmetric integer triangular arrays
$\binom{N}{2n}T_s^{N-2n}(2n+1)$; see~\ref{app2}, in particular \eqref{SHAGN}.
\end{remark}

We now establish explicitly the connection between the formula for
$M_{11}^{(N)}$ derived above and Eqs.~(27)--(29) in \cite{FGS-2025}.
By suitably relabeling the indices according to
\[
N-2n \to g, \qquad s \to k, \qquad n \to m-1,
\]
the following contribution to Eq.~\eqref{eq:double_sum_target}
\[
\sum_{s=0}^{N-2n}
\binom{N-n-s}{n}
\binom{n+s-1}{n-1}
\alpha^{N-2n-s}\beta^s
\]
takes the form
\begin{equation*}
\begin{aligned}
\sum_{k=0}^{g}
\binom{m-1+g-k}{m-1}
\binom{m-2+k}{m-2}
\alpha^{g-k}\beta^k
&=
\sum_{k=0}^{g} C_m(g,k)\beta^k \\
&= P_m^g(\alpha,\beta).
\end{aligned}
\end{equation*}
Thus, the expression above coincides precisely with the submultinomial
representation introduced in Eqs.~(27) and (28) of \cite{FGS-2025}.
In that work, the general formula is inferred from the explicit expressions obtained for the particular cases $P_2^g(\alpha,\beta)$ and $P_3^g(\alpha,\beta)$.
A direct comparison of \eqref{eq:double_sum_target} with
\cite[Eq.~(29)]{FGS-2025}, however, reveals a discrepancy in the latter
formula. Indeed, in our notation the summation index satisfies
\[
1\leq n\leq \left\lfloor\frac{N}{2}\right\rfloor.
\]
Under the change of variables $m=n+1$, this range becomes
\[
2\leq m\leq
\left\lfloor\frac{N}{2}\right\rfloor+1.
\]
Hence, the value $m=1$ is not attained in the sum
\eqref{eq:double_sum_target}. In particular, the term
$P_1^N(\alpha,\beta)$ appearing in Eq.~(29) of \cite{FGS-2025} would
correspond to $n=0$, which lies outside the admissible range of the
summation index. Consequently, Eq.~(29) of \cite{FGS-2025} requires a
correction: the term corresponding to $m=1$ should not be included in
the stated sum.

%%%%%%%%%%%%%%%%%%%%%%%%%%%%%%%%%%%%%%%%%%%%%%%%%%%%%%%%%%%%%%%%%%%%%%
\section{Conclusions} \label{sec:concl}

In this paper, we have established a complete analytical framework for the transfer-matrix dynamics of one-dimensional arrays of Dirac delta potentials within the finite Kronig-Penney model.
By addressing the open question left unresolved by Figueroa et al. \cite{FGS-2025}, we have provided a rigorous, self-contained mathematical proof of the closed-form representation of the $N$th power of the unit-cell transfer matrix $M^N$.

Beyond bridging this analytical gap, our results forge a clear connection between multiple quantum scattering theory, matrix combinatorics, and hypercomplex analysis. From a computational perspective, expressing matrix powers in terms of Chebyshev polynomials eliminates the need for recursive matrix multiplications, offering a highly efficient scheme for evaluating global transmission and reflection coefficients across large system sizes $N$.

Future investigations may extend the framework developed here to more general classes of one-dimensional scattering systems. In particular, generating function and orthogonal-polynomial techniques may provide a powerful tool for studying non-periodic and quasi-crystalline arrays of delta potentials, while the explicit Chebyshev representation obtained in this work opens new perspectives for the analytical characterization of resonances, bound states, and transmission thresholds in finite quantum structures.

%%%%%%%%%%%%%%%%%%%%%%%%%%%%%%%%%%%%%%%%%%%%%%%%%%%%%%%%%%%%%%%%%%%%%%
\appendix

\section{Hypercomplex generalized Appell polynomials}\label{app1}

For completeness and to facilitate independent reading, we briefly recall some basic notions of higher-dimensional analysis in Clifford algebras. Our focus is on the hypercomplex generalized Appell polynomials introduced in \cite{Malonek-2006} in the context of polynomial approximation of quasi-conformal mappings in higher dimensions. These polynomials provide a natural extension of the role played by powers of a complex variable in classical holomorphic function theory to the setting of Clifford analysis \cite{BDS-1982}.

We restrict ourselves to the elementary construction based on the series expansion of the hypercomplex Cauchy kernel in $\mathbb{R}^{m+1}$. The relevant formulas rely on the Pochhammer symbol
\[
(a)_s=\frac{\Gamma(a+s)}{\Gamma(a)}
=a(a+1)\cdots(a+s-1), \qquad s\geq1,
\qquad (a)_0:=1,
\]
and on the Chu--Vandermonde convolution identity \cite{Falcao-2012}
\begin{equation}\label{Vandermonde}
(a+b)_k
=
\sum_{s=0}^k
\binom{k}{s}(a)_{k-s}(b)_s.
\end{equation}

We now introduce the Clifford-algebraic setting in which the hypercomplex variable is represented. Let
$\{e_1,e_2,\ldots,e_m\}$, $m\geq1$, be an orthonormal basis of the Euclidean space $\mathbb{R}^m$. The associated Clifford algebra ${\mathcal C\ell}_{0,m}$ is generated by these basis elements, with multiplication determined by
\[
e_i e_j=-e_j e_i,\qquad e_i^2=-1,
\qquad i,j=1,\ldots,m.
\]
The space $\mathbb{R}^{m+1}$ is naturally identified with the paravector subspace
\[
\mathcal A_m
:=
\operatorname{span}_{\mathbb R}\{1,e_1,\ldots,e_m\}
\subset {\mathcal C\ell}_{0,m},
\]
by associating to each
$(x_0,x_1,\ldots,x_m)\in\mathbb{R}^{m+1}$ the paravector
\[
x=x_0+\underline{x}
=
x_0+\sum_{k=1}^m e_kx_k.
\]
The Clifford conjugate and Euclidean norm of the paravector $x$ are given by
\[
\bar{x}=x_0-\underline{x},
\qquad
|x|
=(x\bar{x})^{1/2}
=(\bar{x}x)^{1/2}
=\left(\sum_{k=0}^m x_k^2\right)^{1/2}.
\]
In the special case $m=1$, with $e_1=i$, the paravector space $\mathcal{A}_1$ is naturally identified with the complex plane, so that
\[
\mathcal{A}_1\cong\mathbb{C}\cong\mathbb{R}^2.
\]
For $m>1$, the ordinary powers $x^k$ of a hypercomplex variable are, in general, not monogenic with respect to the generalized Cauchy-Riemann operator $\bar \partial$, \cite{Cacao-2017}. Hence, unlike the complex case, they do not provide a monogenic polynomial system suitable for higher-dimensional Clifford analysis.

The introduction in \cite{Malonek-2006} of bivariate polynomials in the two mutually conjugate variables $x$ and $\bar{x}$ provides a solution to this problem. These polynomials retain, with respect to the conjugate generalized Cauchy-Riemann operator $\partial$, the characteristic differentiation property of power functions in complex analysis. More precisely, the sequence
\[
\left\{P_k^m(x,\bar{x})\right\}_{k=0}^{+ \infty}
\]
of homogeneous monogenic Appell polynomials satisfies the following properties, referred to as the Appell properties in \cite{Cacao-2017}:
\begin{enumerate}
    \item $P_0^m(x,\bar{x})\equiv 1$;
    \item $P_k^m(x,\bar{x})$ is homogeneous of degree $k$;
    \item
    $
    \partial P_k^m(x,\bar{x})
    =
    k\,P_{k-1}^m(x,\bar{x}),
    \qquad k=1,2,\ldots.
    $
\end{enumerate}
These properties provide the essential link with the classical Appell sequence $\{z^k\}_{k=0}^{+\infty}$ in complex analysis, for which
\[
\frac{d}{dz}z^k=kz^{k-1}.
\]
Thus, the hypercomplex generalized Appell polynomials preserve the fundamental differentiation structure of powers of a complex variable while extending it to the higher-dimensional Clifford setting.

These polynomials arise naturally from the expansion of the hypercomplex Cauchy-kernel used in Clifford analysis to generalize Cauchy's integral formula, \cite{BDS-1982}. Since the kernel involves both $x$ and $\bar{x}$, its expansion yields bivariate polynomials of the form
\begin{align}
\frac{\overline{1-x}}{|1-x|^{m+1}}
&=
\frac{1-\bar{x}}
{\left[(1-x)(1-\bar{x})\right]^{(m+1)/2}}
\nonumber\\
&=
\frac{1}{(1-x)^{(m+1)/2}}
\frac{1}{(1-\bar{x})^{(m-1)/2}}
\nonumber\\
%\label{cauchykernel1}\\
&=
\sum_{k=0}^{\infty}
\left(
\sum_{s=0}^{k}
\frac{\left(\frac{m+1}{2}\right)_{k-s}}{(k-s)!}
\frac{\left(\frac{m-1}{2}\right)_s}{s!}
\right)
x^{k-s}\bar{x}^s
\nonumber\\
&=
\sum_{k=0}^{\infty}
\frac{(m)_k}{k!}
P_k^m(x,\bar{x}).
\label{cauchykernel2}
\end{align}
Here, the Chu-Vandermonde identity \eqref{Vandermonde}, with
\[
a=\frac{m+1}{2},
\qquad
b=\frac{m-1}{2},
\]
gives $a+b=m$, which accounts for the factor $(m)_k$ in the last line of \eqref{cauchykernel2}. This leads to the definition
\begin{equation}\label{genAppelpoly}
P_k^m(x,\bar{x})
=
\sum_{s=0}^{k}
T_s^k(m)x^{k-s}\bar{x}^{s},
\end{equation}
where
\begin{equation}\label{gen Pascal}
T_s^k(m)
=
\binom{k}{s}
\frac{
\left(\frac{m+1}{2}\right)_{k-s}
\left(\frac{m-1}{2}\right)_s
}
{(m)_k}.
\end{equation}
Thus, for each fixed degree $k$, the coefficients $T_s^k(m)$ determine the expansion of $P_k^m(x,\bar{x})$ in the bivariate monomial basis
\[
x^k,x^{k-1}\bar{x},\ldots,\bar{x}^k.
\]
For brevity, we refer to the coefficients $T_s^k(m)$ as the \textit{Hypercomplex Appell Generated Numbers} (HAGN).

The preceding construction yields the following generating function.

\begin{proposition}\cite{Cacao-2017}
The hypercomplex generalized Appell polynomials satisfy
\begin{equation}\label{GenerGeomSeries:2}
g(x)
=
(1-x)^{-1}|1-x|^{1-m}
=
\sum_{k=0}^{\infty}
\frac{(m)_k}{k!}P_k^m(x,\bar{x}),
\qquad |x|<1.
\end{equation}
\end{proposition}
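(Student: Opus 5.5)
The plan is to start from the closed form of the hypercomplex Cauchy kernel and expand it as a product of two binomial series. For $|x|<1$ the paravector identity $(1-x)(1-\bar{x})=|1-x|^2$ holds, since $x$ and $\bar x$ commute: they are both built from $x_0$ and the same $\underline{x}$. This gives
\[
g(x)=(1-x)^{-1}|1-x|^{1-m}=\frac{1}{1-x}\,\bigl[(1-x)(1-\bar x)\bigr]^{(1-m)/2}.
\]
Splitting $|1-x|^{1-m}$ as $(1-x)^{(1-m)/2}(1-\bar x)^{(1-m)/2}$, the total exponent of $(1-x)$ becomes $-1+(1-m)/2=-(m+1)/2$. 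Hence
\[
g(x)=(1-x)^{-(m+1)/2}(1-\bar x)^{-(m-1)/2}.
\]

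Next I would compare this with the kernel in \eqref{cauchykernel2}. Multiplying numerator and denominator of $(1-x)^{-1}$ by $(1-\bar x)$ gives $(1-x)^{-1}=\overline{1-x}/|1-x|^2$. Therefore $g(x)=\overline{1-x}/|1-x|^{m+1}$, which is precisely the left-hand side of \eqref{cauchykernel2}.

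I would then expand each factor with the generalized binomial series $(1-y)^{-a}=\sum_{j\ge0}\frac{(a)_j}{j!}y^j$, valid for $|y|<1$. Here $|x|=|\bar x|<1$, so both series converge absolutely. Because $x$ and $\bar x$ commute, the Cauchy product can be reorganized by total degree $k$. This produces $\sum_k\sum_{s=0}^k \frac{((m+1)/2)_{k-s}}{(k-s)!}\frac{((m-1)/2)_s}{s!}x^{k-s}\bar x^s$.

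Finally I would factor out $(m)_k/k!$ from the degree-$k$ block. The remaining coefficient of $x^{k-s}\bar x^s$ is $\frac{k!}{(k-s)!s!}\frac{((m+1)/2)_{k-s}((m-1)/2)_s}{(m)_k}$, which equals $T_s^k(m)$ as defined in \eqref{gen Pascal}. By \eqref{genAppelpoly}, the degree-$k$ block is therefore $\frac{(m)_k}{k!}P_k^m(x,\bar x)$. The Chu--Vandermonde identity \eqref{Vandermonde} with $a=(m+1)/2$, $b=(m-1)/2$ is not needed for the identity itself. It only confirms the normalization, namely that $\sum_s T_s^k(m)=1$, so that $P_k^m(1,1)=1$.

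The main point requiring care is the legitimacy of the manipulations in the Clifford algebra. Splitting $|1-x|^{1-m}$ into a product of fractional powers of $(1-x)$ and $(1-\bar x)$, and rearranging the double series, both depend on $x$ and $\bar x$ commuting and on absolute convergence. Both hold because $x$ lies in the commutative subalgebra generated by $1$ and $\underline{x}/|\underline{x}|$, which is isomorphic to $\mathbb{C}$. In that subalgebra the complex binomial theorem applies verbatim, with principal branches, since $1-x$ and $1-\bar x$ lie in the right half-plane for $|x|<1$.
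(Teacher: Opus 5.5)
Your proposal is correct and follows the paper's route. The paper obtains the statement from the Cauchy-kernel expansion \eqref{cauchykernel2}: it writes $g(x)=\overline{1-x}/|1-x|^{m+1}=(1-x)^{-(m+1)/2}(1-\bar x)^{-(m-1)/2}$, expands both factors as binomial series, groups the terms by total degree, and reads off $\frac{(m)_k}{k!}P_k^m(x,\bar x)$ from the definition of $T_s^k(m)$.

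Your added justification fills in details the paper leaves implicit: working in the commutative subalgebra spanned by $1$ and $\underline{x}/|\underline{x}|$, using principal branches on the right half-plane, and appealing to absolute convergence for $|x|<1$. You are also right that Chu--Vandermonde only explains the normalization by $(m)_k$, not the identity itself.
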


\begin{remark}\label{rem:analogy}
The analogy with the classical generalized geometric series
\begin{equation}\label{geom}
\frac{1}{(1-z)^m}
=
\sum_{k=0}^{\infty}
\frac{(m)_k}{k!}z^k,
\qquad z\in\mathbb C,\quad m>0,
\end{equation}
is transparent from \eqref{GenerGeomSeries:2}. The coefficients $(m)_k/k!$ are preserved, while the monomials $z^k$ are replaced by the hypercomplex Appell polynomials $P_k^m(x,\bar{x})$. In this sense, $P_k^m(x,\bar{x})$ provide the natural higher-dimensional generalization of the power functions $z^k$, extending the classical setting $\mathbb C\cong\mathbb R^2$ to $\mathbb R^{m+1}$.
\end{remark}

%%%%%%%%%%%%%%%%%%%%%%%%%%%%%%%%%%%%%%%%%%%%%%%%%%%%%%%%%%%%%%%%%%%%%%%%%%%%%%%%%%%%%%%%%%

\section{Non-symmetric generalized Pascal triangles}\label{app2}

When arranged by increasing degree $k$ for a fixed value of the parameter $m$, the HAGN coefficients in \eqref{gen Pascal} form a one-parameter family of non-symmetric Pascal triangles, whose entries are, in general, rational rather than integer numbers. Basic properties of these triangles were established by Falcão and Malonek \cite{Falcao-2012} using combinatorial methods, including recurrence relations. In this appendix, we collect some further properties of the coefficients that are relevant to the discussion in Section~\ref{sec:triang}. In particular, we establish their partition-of-unity property, examine their restrictions to the real and complex cases, and describe the integer-valued scaling that arises for odd values of the parameter.

The coefficients are explicitly given by (see \eqref{gen Pascal})
\begin{equation}\label{tks}
T_s^k(m)
=
\binom{k}{s}
\frac{
\left(\frac{m+1}{2}\right)_{k-s}
\left(\frac{m-1}{2}\right)_s
}{
(m)_k
},
\qquad
m,k=1,2,\ldots,\quad s=0,1,\ldots,k.
\end{equation}

We first consider the restriction of the hypercomplex Appell polynomials to the real axis. Setting $\underline{x}=0$ in the paravector $x$ gives
$x=\bar{x}=x_0$. Hence, from \eqref{genAppelpoly},
\begin{equation}\label{eval:1}
P_k^m(x_0)
=
x_0^k\sum_{s=0}^k T_s^k(m),
\qquad k=0,1,\ldots.
\end{equation}
Thus, the value of the sum of the coefficients for each fixed degree $k$ determines whether the restriction of $P_k^m$ to the real axis coincides with the ordinary monomial $x_0^k$. Using the Chu-Vandermonde identity \eqref{Vandermonde} with
\[
a=\frac{m+1}{2},
\qquad
b=\frac{m-1}{2},
\]
we obtain
\begin{equation}\label{eval:2}
\sum_{s=0}^k T_s^k(m)
=
\frac{1}{(m)_k}
\sum_{s=0}^k
\binom{k}{s}
\left(\frac{m+1}{2}\right)_{k-s}
\left(\frac{m-1}{2}\right)_s
=
1.
\end{equation}
Consequently, for every fixed degree $k$, the coefficients
$T_s^k(m)$ form a partition of unity. Substitution into \eqref{eval:1} then gives
\[
P_k^m(x_0)=x_0^k,
\]
independently of the value of $m\geq1$. Thus, restriction from the hypercomplex setting in $\mathbb{R}^{m+1}$ to the real axis yields precisely the ordinary monomial basis.

The complex case is recovered when $m=1$. Writing
$x=x_0+x_1e_1=z$, with $e_1=i$, we have
$\mathbb{R}^{m+1}=\mathbb{R}^2\cong\mathbb{C}$ and
$(1)_k=k!$. Moreover,
\[
T_0^k(1)=1,
\qquad
T_s^k(1)=0,\quad 1\leq s\leq k,
\]
so that
\[
P_k^1(z,\bar z)=z^k.
\]
Hence the hypercomplex Appell polynomials consistently recover the usual real and complex power functions under the corresponding lower-dimensional restrictions. This provides an additional confirmation of the analogy described in Remark~\ref{rem:analogy}.

We next consider the factor
\[
\frac{(m)_k}{k!}
\]
appearing in the generating function \eqref{GenerGeomSeries:2}. For positive integer $m$ and $k\geq0$, the Pochhammer symbol satisfies
\begin{equation}\label{Pochtrans}
\frac{(m)_k}{k!}
=
\binom{m+k-1}{k},
\quad\text{or equivalently}\quad
(m)_k
=
k!\binom{m+k-1}{k}.
\end{equation}
This identity is useful for converting the Pochhammer representation of the HAGN coefficients into binomial form.

The same quantity also has a natural combinatorial interpretation. If
$\mathcal{H}_k(\mathbb{R}^m)$ denotes the space of homogeneous polynomials of degree $k$ in $m$ variables, then
\[
\dim\mathcal{H}_k(\mathbb{R}^m)
=
\binom{m+k-1}{k}
=
\frac{(m)_k}{k!}.
\]
Thus, the coefficient multiplying $P_k^m$ in the generating function has, besides its role in the expansion of the generalized Cauchy-kernel, a direct interpretation as the dimension of the space of homogeneous polynomials of degree $k$ in $m$ variables.

Finally, we recall the integer-valued scaling of the HAGN coefficients for odd values of the parameter, as discussed in \cite{Cacao-2023}. Let
\[
m=2n+1,
\qquad n\geq1.
\]
Using \eqref{Pochtrans} to rewrite the Pochhammer symbols in \eqref{tks} in terms of binomial coefficients gives
\begin{equation}\label{tksodd}
T_s^k(2n+1)
=
\frac{
\binom{n+k-s}{n}
\binom{n+s-1}{n-1}
}{
\binom{k+2n}{2n}
}.
\end{equation}
The denominator is independent of $s$. Therefore, multiplying the entries in the $k$-th row by the common factor
$\binom{k+2n}{2n}$ produces integer-valued coefficients. We refer to these as the {\it Scaled HAGN} (SHAGN). Explicitly,
\begin{equation}\label{tksscaled}
\binom{k+2n}{2n}T_s^k(2n+1)
=
\binom{n+k-s}{n}
\binom{n+s-1}{n-1}.
\end{equation}
Thus, for odd $m$, the rational HAGN entries can be transformed into a non-symmetric Pascal triangle with integer entries.

A particularly useful form is obtained by fixing an integer $N$ and setting
\[
k=N-2n.
\]
Since $k+2n=N$, \eqref{tksscaled} becomes
\begin{equation}\label{SHAGN}
\binom{N}{2n}
T_s^{N-2n}(2n+1)
=
\binom{N-n-s}{n}
\binom{n+s-1}{n-1}.
\end{equation}
This is precisely the binomial-product representation used in
\eqref{bin}. It provides the integer-valued form of the HAGN coefficients that underlies the combinatorial arrays considered there.

%%%%%%%%%%%%%%%%%%%%%%%%%%%%%%%%%%%%%%%%%%%%%%%%%%%%%%%%%
\section*{Acknowledgments}
The first author is a member of the \textit{Gruppo Nazionale per il Calcolo Scientifico} of the Istituto Nazionale di Alta Matematica. Her research was partially supported by the  INdAM--GNCS Project, code CUP$\_$E53C25002010001.
The work of the third author at CIDMA was supported by FCT (Portuguese Foundation for Science and Technology) under Projects 
\noindent UID/04106/2025 (https://doi.org/10.54499/UID/04106/2025) and 
\noindent UID/PRR/04106/2025 \\(https://doi.org/10.54499/UID/PRR/04106/2025).

%%%%%%%%%%%%%%%%%%%%%%%%%%%%%%%%%%%

 %% For citations use: 
%%       \cite{<label>} ==> [1]

%%

%% If you have bib database file and want bibtex to generate the
%% bibitems, please use
%%
%%  \bibliographystyle{elsarticle-num} 
%%  \bibliography{<your bibdatabase>}

\begin{thebibliography}{99}

\bibitem{Aceto-2026} 
Lidia Aceto, Pietro Antonio Grassi,  Helmuth Robert Malonek,
\textit{Clifford-Appell formulation of a Dirac-type Kronig-Penney model in condensed matter physics}, submitted, 2026.

\bibitem{BDS-1982} Fred Brackx, Richard Delanghe, Frank Sommen,  \textit{Clifford analysis}. Pitman, Boston-London-Melbourne, 1982.

\bibitem{Cacao-2017} 
Isabel Cação, Helmuth Robert Malonek, Maria Irene Falcão, \textit{Hypercomplex Polynomials, Vietoris’ Rational Numbers and a Related Integer Numbers Sequence.} Complex Anal. Oper. Theory 11, 1059–1076, 2017.

\bibitem{Cacao-2023} 
Isabel Cação, Helmuth Robert Malonek, Maria Irene Falcão, Graça Tomaz,   \textit{Intrinsic properties of a non-symmetric number triangle.} J. Integer Seq., 26(4): Art. 23.4.8, 12, 2023.

\bibitem{Falcao-2012} 
Maria Irene Falcão,  Helmuth Robert Malonek,  \textit{A note on a one-parameter family of non-symmetric number triangles.} Opuscula Math., 32(4): 661--673, 2012.

\bibitem{FGS-2025} 
Joaquín Figueroa,  Ivan Gonzalez,  Daniel Salinas-Arizmendi,
 \textit{A novel transfer matrix framework for multiple Dirac delta potentials}, Phys. Lett. A, 555: Paper No. 130785,  2025. 

\bibitem{Malonek-2006} 
 Helmuth Robert Malonek, Maria Irene Falcão, \textit{3D-mappings using monogenic functions}, ICNAAM-2006 Conference Proceedings, Wiley-VCH, Weinheim, 615--619, 2006.
 


\end{thebibliography}

%% else use the following coding to input the bibitems directly in the
%% TeX file.

%% Refer following link for more details about bibliography and citations.
%% https://en.wikibooks.org/wiki/LaTeX/Bibliography_Management

\end{document}